\documentclass{amsart}
\usepackage[varg]{txfonts}
\usepackage[dvips]{graphicx,psfrag}
\usepackage[subrefformat=parens]{subcaption}
\usepackage{enumitem}
\usepackage[norelsize,linesnumbered,algoruled]{algorithm2e}
\usepackage[cm]{fullpage}
\usepackage{hyperref}
\usepackage{amsthm}
\newtheorem{theorem}{Theorem}
\newtheorem{lemma}[theorem]{Lemma}
\newtheorem{claim}[theorem]{\it Claim}
\newtheorem{corollary}[theorem]{Corollary}
\newtheorem{problem}{Problem}

\title{A vertex ordering characterization of simple-triangle graphs}
\author{Asahi Takaoka}
\address{
  Department of Information Systems Creation, 
  Kanagawa University, 
  Rokkakubashi 3-27-1 Kanagawa-ku, 
  Kanagawa, 221--8686, Japan 
}
\email{takaoka@jindai.jp}

\keywords{
Alternatly orientable graphs, 
Linear-interval orders, 
PI graphs, 
PI orders, 
Simple-triangle graphs, 
Vertex ordering characterization
}

\begin{document}
\begin{abstract}
Consider two horizontal lines in the plane. 
A pair of a point on the top line and 
an interval on the bottom line 
defines a triangle between two lines. 
The intersection graph of such triangles 
is called a simple-triangle graph. 
This paper shows a vertex ordering characterization of simple-triangle graphs 
as follows: a graph is a simple-triangle graph 
if and only if there is a linear ordering of the vertices 
that contains both an alternating orientation of the graph and 
a transitive orientation of the complement of the graph. 
\end{abstract}

\maketitle

\section{Introduction}
Let $L_1$ and $L_2$ be two horizontal lines in the plane with $L_1$ above $L_2$. 
A pair of a point on the top line $L_1$ and an interval on the bottom line $L_2$ 
defines a triangle between $L_1$ and $L_2$. 
The point on $L_1$ is called the \emph{apex} of the triangle, and 
the interval on $L_2$ is called the \emph{base} of the triangle. 
A \emph{simple-triangle graph} is the intersection graph of such triangles, 
that is, a simple undirected graph $G$ is called a simple-triangle graph 
if there is such a triangle for each vertex 
and two vertices are adjacent if and only if 
the corresponding triangles have a nonempty intersection. 
The set of triangles is called a \emph{representation} of $G$. 
See Figures~\ref{fig:example}\subref{fig:C-graph} 
and~\ref{fig:example}\subref{fig:C-representation} for example. 
Simple-triangle graphs are also known as 
\emph{PI graphs}~\cite{BLS99,COS08-ENDM,CK87-CN}, 
where \emph{PI} stands for \emph{Point-Interval}. 
Simple-triangle graphs were introduced 
as a generalization of both interval graphs and permutation graphs, 
and they form a proper subclass of trapezoid graphs~\cite{CK87-CN}. 
Although a lot of research has been done for interval graphs, 
for permutation graphs, and for trapezoid graphs 
(see~\cite{BLS99,Golumbic04,GT04,MM99,Spinrad03} for example), 
there are few results 
for simple-triangle graphs~\cite{BLR10-Order,COS08-ENDM,CK87-CN}. 
The polynomial-time recognition algorithm has been 
given~\cite{Mertzios15-SIAMDM,Takaoka16-arXiv}, 
but the complexity of the graph isomorphism problem 
still remain an open question~\cite{Takaoka15-IEICE,Uehara14-DMTCS},
which makes it interesting to study the structural characterizations
of this graph class. 

A \emph{vertex ordering} of a graph $G = (V, E)$ is a linear ordering 
$\sigma = v_1, v_2, \ldots, v_n$ of the vertex set $V$ of $G$. 
A \emph{vertex ordering characterization} of a graph class $\mathcal{G}$ 
is a characterization of the following type: 
a graph $G$ is in $\mathcal{G}$ if and only if $G$ has 
a vertex ordering fulfilling some properties. 
See~\cite{BLS99,CS15-JGT} for example of vertex ordering characterizations. 
This paper shows a vertex ordering characterization of simple-triangle graphs. 
More precisely, we characterize the apex orderings of simple-triangle graphs. 
Here, we call a vertex ordering $\sigma$ of a simple-triangle graph $G$ 
an \emph{apex ordering} if there is a representation of $G$ such that 
$\sigma$ coincides with the ordering of the apices of the triangles in the representation. 
See Figure~\ref{fig:example}\subref{fig:C-ordering} for example. 

The organization of this paper is as follows. 
Before describing the vertex ordering characterization, 
we show in Section~\ref{section:orders} a characterization of 
the linear-interval orders, 
the partial orders associated with simple-triangle graphs. 
The vertex ordering characterization of simple-triangle graphs
is shown in Section~\ref{section:apex}. 
We remark some open questions and related topics 
in Section~\ref{section:conclusion}. 

\begin{figure*}[t]
  \psfrag{L1}{$L_1$}
  \psfrag{L2}{$L_2$}
  \psfrag{a1}{$a_1$}
  \psfrag{a2}{$a_2$}
  \psfrag{b1}{$b_1$}
  \psfrag{b2}{$b_2$}
  \psfrag{c1}{$c_1$}
  \psfrag{c2}{$c_2$}
  \psfrag{c3}{$c_3$}
  \psfrag{c4}{$c_4$}
  \centering\subcaptionbox{A graph $G$. \label{fig:C-graph}}
  {\includegraphics[scale=0.6]{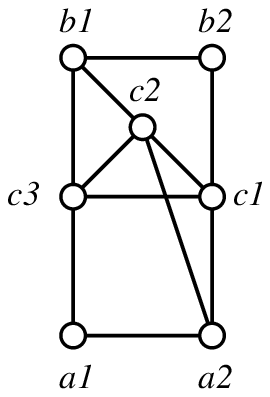}}
  \centering\subcaptionbox{The representation of $G$. \label{fig:C-representation}}
  {\includegraphics[scale=0.6]{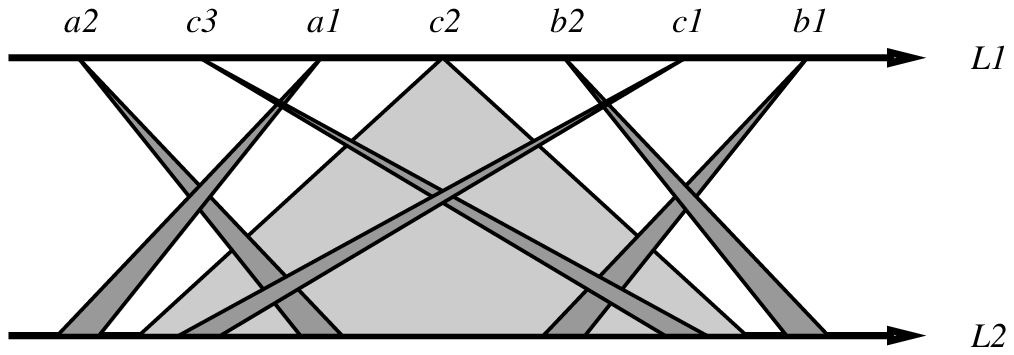}}
  \centering\subcaptionbox{The apex ordering of $G$. \label{fig:C-ordering}}
  {\includegraphics[scale=0.6]{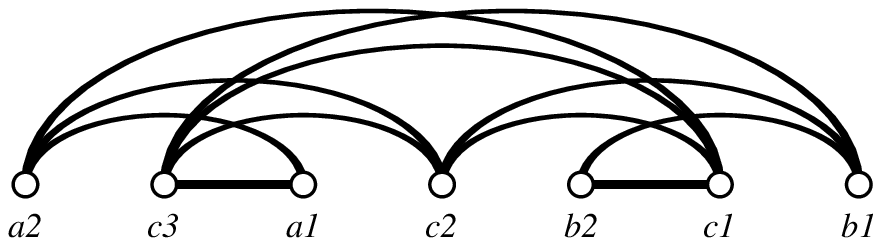}}
  \centering\subcaptionbox{The order $P$. \label{fig:C-order}}
  {\includegraphics[scale=0.6]{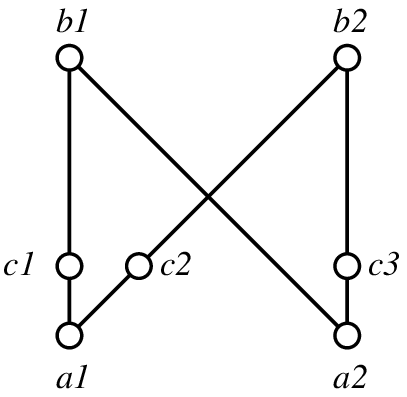}}
  \caption{
    A simple-triangle graph $G$, 
    the representation of $G$ consisting of the triangles, 
    the apex ordering of $G$, and 
    the Hasse diagram of linear-interval order $P$. 
    }
  \label{fig:example}
\end{figure*}

\section{Linear-interval orders}\label{section:orders}
A \emph{partial order} is a pair $P = (V, \prec_P)$, 
where $V$ is a finite set and $\prec_P$ is a binary relation on $V$ 
that is irreflexive and transitive. 
The finite set $V$ is called the \emph{ground set} of $P$. 
A partial order $P = (V, \prec_P)$ is called a \emph{linear order} 
if for any two elements $u, v \in V$, $u \prec_P v$ or $u \succ_P v$. 
A partial order $P = (V, \prec_P)$ is called an \emph{interval order} 
if for each element $v \in V$, there is an interval $I(v) = [l(v), r(v)]$ 
on the real line such that for any two elements $u, v \in V$, 
$u \prec_P v \iff r(u) < l(v)$, 
that is, $I(u)$ lies completely to the left of $I(v)$. 
The set of intervals $\{I(v) \mid v \in V\}$ is called 
an \emph{interval representation} of $P$. 

Let $P_1 = (V, \prec_1)$ and $P_2 = (V, \prec_2)$ be two partial orders 
with the same ground set. 
The \emph{intersection} of $P_1$ and $P_2$ is the partial order 
$P = (V, \prec_P)$ such that $u \prec_P v \iff u \prec_1 v$ and $u \prec_2 v$; 
it is denoted by $P = P_1 \cap P_2$. 
A partial order $P$ is called an \emph{linear-interval order} 
(also known as a \emph{PI order}~\cite{COS08-ENDM}) 
if there is a pair of a linear order $L$ and 
an interval order $P_I$ such that  $P = L \cap P_I$. 
Equivalently, a partial order $P = (V, \prec_P)$ is a linear-interval order 
if for each element $v \in V$, there is a triangle $T(v)$ defined by 
a point on the top line $L_1$ and an interval on the bottom line $L_2$
(recall that $L_1$ and $L_2$ are two horizontal lines with $L_1$ above $L_2$) 
such that $u \prec_P v$ if and only if 
$T(u)$ lies completely to the left of $T(v)$. 
See Figures~\ref{fig:example}\subref{fig:C-representation} 
and~\ref{fig:example}\subref{fig:C-order} for example. 

A linear order $L = (V, \prec_L)$ is called a \emph{linear extension} of 
a partial order $P = (V, \prec_P)$ if $u \prec_L v$ whenever $u \prec_P v$. 
Hence, the linear extension $L$ of $P$ has all the relations of $P$ with 
the additional relations that make $L$ linear. 
We define two properties of linear extensions. 
\begin{itemize}
\item 
Let $\mathbf{2+2}$ denote the partial order 
consisting of four elements $a_0, a_1, b_0, b_1$ 
whose only relations are $a_0 \prec_P b_0$ and $a_1 \prec_P b_1$. 
A linear extension $L = (V, \prec_L)$ of $P = (V, \prec_P)$ 
is said to fulfill the \emph{$\mathbf{2+2}$ rule} if 
for every suborder $\mathbf{2+2}$ in $P$, 
either $b_0 \prec_L a_1$ or $b_1 \prec_L a_0$. 

\item 
An \emph{alternating $2k$-anticycle} of a linear extension $L = (V, \prec_L)$ 
of $P = (V, \prec_P)$ is an induced suborder consisting of distinct $2k$ elements 
$a_0, b_0, a_1, b_1, \ldots, a_{k-1}, b_{k-1}$ with 
$a_i \prec_P b_i$ and $a_{i+1} \prec_L b_i$ but $a_{i+1} \not\prec_P b_i$ 
for any $i = 0, 1, \ldots, k-1$ (indices are modulo $k$). 
See Figure~\ref{fig:anticycle} for example. 
\end{itemize}
Notice that a linear extension $L$ of $P$ fulfills the $\mathbf{2+2}$ rule 
if and only if $L$ contains no alternating 4-anticycle. 
These properties characterize the linear-interval orders as follows. 
\begin{theorem}\label{theorem:partial order}
For a partial order $P$, the following conditions are equivalent: 
\begin{enumerate}[label=\upshape{(\roman*)}]
\item $P$ is a linear-interval order; 
\item $P$ has a linear extension fulfilling the $\mathbf{2+2}$ rule; 
\item $P$ has a linear extension that contains no alternating 4-anticycle. 
\end{enumerate}
\end{theorem}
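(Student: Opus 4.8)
The plan is to prove the implications (i) $\Rightarrow$ (ii) and (ii) $\Rightarrow$ (i); the equivalence (ii) $\Leftrightarrow$ (iii) follows at once from the remark preceding the statement, since for any fixed linear extension $L$ of $P$, fulfilling the $\mathbf{2+2}$ rule and containing no alternating $4$-anticycle mean the same thing. For (i) $\Rightarrow$ (ii) I would argue directly. Let $P = L \cap P_I$ with $P_I$ an interval order, and fix an interval representation $\{[l(v), r(v)] : v \in V\}$ of $P_I$. Since $P \subseteq L$, the order $L$ is a linear extension of $P$, and I claim it fulfills the $\mathbf{2+2}$ rule. Let $a_0, b_0, a_1, b_1$ induce a $\mathbf{2+2}$ in $P$, so $a_0 \prec_P b_0$, $a_1 \prec_P b_1$, $a_1 \not\prec_P b_0$, and $a_0 \not\prec_P b_1$. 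By $P = L \cap P_I$, the relation $a_1 \not\prec_P b_0$ forces $a_1 \not\prec_L b_0$, i.e.\ $b_0 \prec_L a_1$ (and we are done), or $a_1 \not\prec_{P_I} b_0$, i.e.\ $r(a_1) \ge l(b_0)$; symmetrically $a_0 \not\prec_P b_1$ gives $b_1 \prec_L a_0$ (done) or $r(a_0) \ge l(b_1)$. Combining the two inequalities with $r(a_0) < l(b_0)$ and $r(a_1) < l(b_1)$ yields $l(b_1) > r(a_1) \ge l(b_0) > r(a_0) \ge l(b_1)$, a contradiction. Hence $b_0 \prec_L a_1$ or $b_1 \prec_L a_0$, and $L$ fulfills the rule.

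The substantial direction is (ii) $\Rightarrow$ (i): given a linear extension $L$ of $P$ fulfilling the $\mathbf{2+2}$ rule, I must construct an interval order $P_I$ with $P = L \cap P_I$. Two reductions shape the argument. First, every relation of $P$ runs forward along $L$, so it suffices to produce an interval order $P_I \supseteq P$ on $V$ containing no forward pair outside $P$ --- that is, $u \prec_{P_I} v$ and $u \prec_L v$ force $u \prec_P v$ --- for then $L \cap P_I = P$ automatically. Second, by the classical characterization of interval orders (a partial order is an interval order exactly when its principal down-sets $D(v) = \{u : u \prec v\}$ are linearly ordered by inclusion), the job becomes: enlarge each $D_P(v) = \{u : u \prec_P v\}$ to a set $D_{P_I}(v)$ by inserting only elements that lie $L$-after $v$, so that the resulting family is a chain and really is the down-set family of a partial order. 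The $\mathbf{2+2}$ rule is exactly the hypothesis that makes this possible: if $D_P(b_0)$ and $D_P(b_1)$ are $\subseteq$-incomparable, witnesses $a_0 \in D_P(b_0) \setminus D_P(b_1)$ and $a_1 \in D_P(b_1) \setminus D_P(b_0)$ make $a_0, b_0, a_1, b_1$ induce a $\mathbf{2+2}$, so, say, $b_0 \prec_L a_1$; then $a_1$ lies $L$-after $b_0$, so putting $a_1$ into $D_{P_I}(b_0)$ is legal, and this is the move that forces $D_{P_I}(b_1) \subseteq D_{P_I}(b_0)$.

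Concretely I would run an induction on $|V|$. Delete the $L$-greatest element $v_n$ (which is also $P$-maximal); then $L - v_n$ is a linear extension of $P - v_n$ still fulfilling the $\mathbf{2+2}$ rule, so by the inductive hypothesis there is a base-interval representation of an interval order $P'_I$ on $V \setminus \{v_n\}$ with $(L - v_n) \cap P'_I = P - v_n$. Then reinsert $v_n$ with a base interval whose left endpoint lies just to the right of the right endpoints of the intervals of $D_P(v_n)$ and whose right endpoint lies past all the others, so that exactly the intervals of $D_P(v_n)$ lie completely to its left; one checks that the resulting $P_I$ is an interval order with $L \cap P_I = P$. The main obstacle --- and the step I expect to cost the real work --- is that this reinsertion is possible only if $D_P(v_n)$ is an initial segment of the right-endpoint order of the representation already built, and this need not hold for an arbitrary representation furnished by the inductive hypothesis; it is here that the $\mathbf{2+2}$ rule must be invoked, and I anticipate having to strengthen the inductive hypothesis so that it produces a representation in which the relevant predecessor sets sit as initial segments of the right-endpoint order, so that successive reinsertions all go through. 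Securing this step completes (ii) $\Rightarrow$ (i); together with (i) $\Rightarrow$ (ii) and (ii) $\Leftrightarrow$ (iii), the theorem follows.
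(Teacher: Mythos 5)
Your treatment of (ii) $\iff$ (iii) and of (i) $\Longrightarrow$ (ii) is correct; the latter is in fact a slightly more direct argument than the paper's (the paper proves a stronger cyclic chain of implications in Lemma~\ref{lemma:2+2}, from which the $\mathbf{2+2}$ rule follows). The problem is the direction (ii) $\Longrightarrow$ (i), which carries essentially all of the content of the theorem and which you have not actually proved. Your induction deletes the $L$-greatest element $v_n$ and reinserts it into a representation of $P - v_n$ supplied by the inductive hypothesis; as you yourself observe, this requires $D_P(v_n)$ to be an initial segment of the right-endpoint order of that representation, and a generic representation need not have this property: the inductive representation may place the interval of some $w \notin D_P(v_n)$ entirely to the left of the interval of some $u \in D_P(v_n)$, which is a backward pair relative to $L$ and hence perfectly legal for the sandwich condition $(L - v_n) \cap P_I' = P - v_n$, yet blocks the reinsertion. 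You defer this to ``strengthening the inductive hypothesis'' without saying what the strengthened hypothesis is or why it can be maintained through all $n$ reinsertions, and that is precisely the hard part: one must certify that the constraints ``add only $L$-later elements to each down-set'' and ``make the down-sets a chain'' are simultaneously satisfiable, and the obstruction to this is not local.

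The paper's resolution is global. It first proves (the first claim inside Lemma~\ref{lemma:proof}) that a linear extension with no alternating $4$-anticycle has no alternating $2k$-anticycle for any $k$, by a minimal-counterexample argument that shortcuts a shortest anticycle; it then runs a round-based greedy construction (Algorithm~\ref{algorithm:construction}) that in each round assigns left endpoints to all current minimal elements and right endpoints to those minimal elements with no outgoing ``dashed'' pair, the absence of long anticycles guaranteeing that each round makes progress (Claim~\ref{claim:2}). Some statement of this strength --- ruling out arbitrarily long alternating anticycles, not merely $4$-anticycles --- is exactly the kind of invariant any completion of your induction would have to supply; without it the proposal is an outline of the easy directions plus a correctly diagnosed but unresolved obstacle in the essential one.
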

\begin{proof}
It is obvious that 
\textbf{\textrm{(ii)}} $\iff$ \textbf{\textrm{(iii)}}. 
The implications 
\textbf{\textrm{(i)}} $\Longrightarrow$ \textbf{\textrm{(ii)}} and 
\textbf{\textrm{(iii)}} $\Longrightarrow$ \textbf{\textrm{(i)}} 
are proved by Lemma~\ref{lemma:2+2} and~\ref{lemma:proof}, 
respectively.
\end{proof}

\begin{figure*}[t]
  \psfrag{a1}{$a_0$}
  \psfrag{a2}{$a_1$}
  \psfrag{a3}{$a_2$}
  \psfrag{b1}{$b_0$}
  \psfrag{b2}{$b_1$}
  \psfrag{b3}{$b_2$}
  \centering\subcaptionbox{The alternating 4-anticycle. \label{fig:4-anticycle}}
  {\includegraphics[scale=0.6]{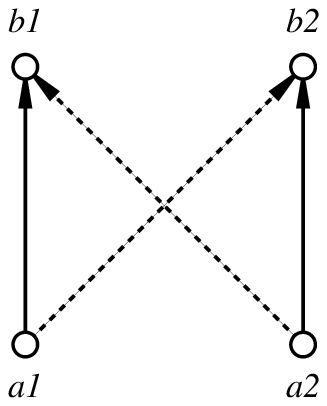}}
  \centering\subcaptionbox{The alternating 6-anticycle. \label{fig:6-anticycle}}
  {\includegraphics[scale=0.6]{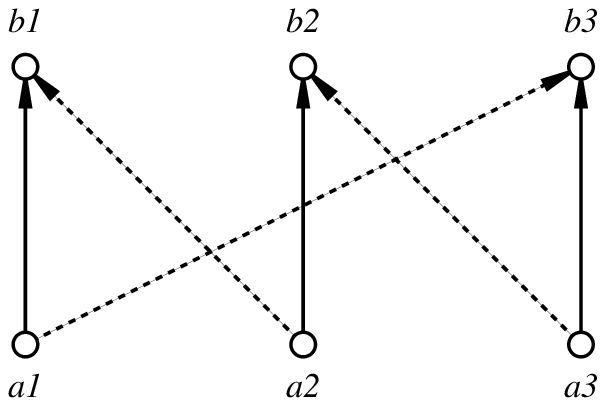}}
  \caption{
    Alternating anticycles. 
    An arrow $a \to b$ denotes the relation $a \prec_P b$, and 
    a dashed arrow $a \dasharrow b$ denotes the relation $a \prec_L b$ 
    but $a \not\prec_P b$. 
    }
  \label{fig:anticycle}
\end{figure*}

\begin{lemma}\label{lemma:2+2}
If a partial order $P = (V, \prec_P)$ has a pair of 
a linear order $L = (V, \prec_L)$ and an interval order $P_I = (V, \prec_I)$ 
with $P = L \cap P_I$, then for any suborder $\mathbf{2+2}$ in $P$, 
the implications $a_0 \prec_L a_1 \iff b_0 \prec_L a_1 \iff b_0 \prec_L b_1 \iff a_0 \prec_L b_1$ holds. 
\end{lemma}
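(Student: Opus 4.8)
The plan is to read everything off the equation $P = L \cap P_I$. Since $\prec_P$ is the intersection of $\prec_L$ and $\prec_I$, we have $\prec_P \subseteq \prec_I$ and $\prec_P \subseteq \prec_L$, so $L$ is a linear extension of $P$ and every relation of $P$ is also a relation of both $L$ and $P_I$. I would fix an interval representation $\{I(v) = [l(v), r(v)] : v \in V\}$ of $P_I$. The two edges of the $\mathbf{2+2}$ give $a_0 \prec_I b_0$ and $a_1 \prec_I b_1$, i.e.\ $r(a_0) < l(b_0)$ and $r(a_1) < l(b_1)$; conversely, because $P = L \cap P_I$, whenever $L$ orders an incomparable pair of the $\mathbf{2+2}$ in some direction, $P_I$ must fail to order that pair in the same direction.

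First I would dispose of the implications that need only transitivity of $\prec_L$ together with $a_0 \prec_L b_0$ and $a_1 \prec_L b_1$: namely $b_0 \prec_L a_1 \Rightarrow a_0 \prec_L a_1$, $\ b_0 \prec_L a_1 \Rightarrow b_0 \prec_L b_1$, $\ a_0 \prec_L a_1 \Rightarrow a_0 \prec_L b_1$, and $b_0 \prec_L b_1 \Rightarrow a_0 \prec_L b_1$. Combining these shows that the four statements $a_0 \prec_L a_1$, $b_0 \prec_L a_1$, $b_0 \prec_L b_1$, $a_0 \prec_L b_1$ are all equivalent once the single remaining implication $a_0 \prec_L b_1 \Rightarrow b_0 \prec_L a_1$ is established.

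For that implication I would argue by contradiction. Assume $a_0 \prec_L b_1$ but, since $L$ is linear, $a_1 \prec_L b_0$. The pairs $\{a_0, b_1\}$ and $\{a_1, b_0\}$ are incomparable in $P$ (they are the two non-edges of the $\mathbf{2+2}$ that matter here), so from $a_0 \prec_L b_1$ and $a_1 \prec_L b_0$ the intersection property forces $a_0 \not\prec_I b_1$ and $a_1 \not\prec_I b_0$, that is, $l(b_1) \le r(a_0)$ and $l(b_0) \le r(a_1)$. Combining with $r(a_0) < l(b_0)$ and $r(a_1) < l(b_1)$ yields the cyclic chain $r(a_0) < l(b_0) \le r(a_1) < l(b_1) \le r(a_0)$, a contradiction.

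I do not expect a genuine obstacle here; the only care needed is the bookkeeping — separating the implications that come for free from transitivity from the one that actually consumes the interval representation, and noticing that of the four incomparabilities in the $\mathbf{2+2}$ only $a_0 \not\prec_P b_1$ and $a_1 \not\prec_P b_0$ are used in the contradiction.
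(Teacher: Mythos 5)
Your proof is correct. It rests on the same two ingredients as the paper's: the transitivity of $\prec_L$ together with $a_0 \prec_L b_0$ and $a_1 \prec_L b_1$ (which hold because $L$ extends $P$), and the observation that $P = L \cap P_I$ forces $u \not\prec_I v$ whenever $u \prec_L v$ but $u \not\prec_P v$, which translates into an interval-overlap constraint. The difference is in the bookkeeping. The paper proves the single implication $a_0 \prec_L a_1 \Rightarrow b_0 \prec_L a_1$ by supposing $a_0 \prec_L a_1 \prec_L b_0$, deducing that $I(a_1)$ straddles the gap between $I(a_0)$ and $I(b_0)$, and concluding $a_0 \prec_P b_1$, contradicting incomparability; it then dispatches the remaining three implications of the cycle with ``by the similar argument'' (at least one of which does genuinely need the interval representation again, or a symmetry appeal). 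You instead observe that four of the implications are free consequences of transitivity, reducing everything to the single implication $a_0 \prec_L b_1 \Rightarrow b_0 \prec_L a_1$, and close that one with the clean cyclic chain $r(a_0) < l(b_0) \le r(a_1) < l(b_1) \le r(a_0)$. This isolates precisely where the interval representation is consumed (only the incomparabilities $a_0 \not\prec_P b_1$ and $a_1 \not\prec_P b_0$ are needed) and avoids the hand-wave over the remaining cases, at the cost of no extra length; the paper's version, by contrast, exhibits the geometric picture of $I(a_1)$ wedged between $I(a_0)$ and $I(b_0)$, which is the intuition reused later in the paper for the $C_4$ rule.
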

\begin{proof}
For an element $v$ of $P_I$, let $I(v)$ denote the interval of $v$ 
in the representation of $P_I$. 
We first show that $a_0 \prec_L a_1 \Longrightarrow b_0 \prec_L a_1$. 
Suppose for a contradiction that $a_0 \prec_L a_1 \prec_L b_0$. 
Since $a_0 \prec_P b_0$, 
the interval $I(a_0)$ lies completely to the left of $I(b_0)$. 
Then since $a_0 \prec_L a_1 \prec_L b_0$, 
the interval $I(a_1)$ must intersect both $I(a_0)$ and $I(b_0)$. 
Since $a_1 \prec_P b_1$, 
the interval $I(a_1)$ lies completely to the left of $I(b_1)$, 
and consequently, $I(a_0)$ lies completely to the left of $I(b_1)$.
From $a_1 \prec_P b_1$, we also have $a_0 \prec_L a_1 \prec_L b_1$, 
which implies $a_0 \prec_P b_1$, a contradiction. 
Thus, $a_0 \prec_L a_1 \Longrightarrow b_0 \prec_L a_1$. 
By the similar argument, we have the other implications 
$b_0 \prec_L a_1 \Longrightarrow 
b_0 \prec_L b_1 \Longrightarrow 
a_0 \prec_L b_1 \Longrightarrow a_0 \prec_L a_1$. 
We note that this proof is implicit in~\cite{CK87-CN}. 
\end{proof}

\begin{lemma}\label{lemma:proof}
If a partial order $P = (V, \prec_P)$ has a linear extension $L = (V, \prec_L)$ 
that contains no alternating 4-anticycle, then there is 
an interval order $P_I = (V, \prec_I)$ with $P = L \cap P_I$. 
\end{lemma}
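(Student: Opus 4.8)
The plan is to build an interval order $P_I$ from an explicit interval representation derived from $P$ and $L$. Write $D_P(v)=\{u\in V : u\prec_P v\}$ and $B(v)=\{u\in V : u\prec_L v \text{ and } u\not\prec_P v\}$; since $L$ extends $P$, the set $\{u : u\prec_L v\}$ is the disjoint union $D_P(v)\cup B(v)$. Because $L$ extends $P$, the identity $P=L\cap P_I$ is equivalent to the single requirement that $u\prec_P v \iff r(u)<l(v)$ for every pair with $u\prec_L v$. So I want the right endpoints to realise a linear extension $\rho$ of $P$ for which the left endpoint $l(v)$ can be placed strictly above $\max\{r(u):u\in D_P(v)\}$ and at or below $\min(\{r(v)\}\cup\{r(u):u\in B(v)\})$; this is possible exactly when $\rho$ puts every element of $D_P(v)$ before every element of $B(v)$, for all $v$.

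The first step is therefore to produce such a $\rho$. Let $\Gamma$ be the digraph on $V$ whose arcs are the relations of $P$ together with all pairs $(a,b)$ for which there is a $v$ with $a\in D_P(v)$ and $b\in B(v)$, and let $\rho$ be any linear extension of $\Gamma$. Such a $\rho$ extends $P$ and has the required separation property, so the second step is routine: set $r(v)$ to be the rank of $v$ in $\rho$, set $l(v)=\max\{r(u):u\in D_P(v)\}+\frac{1}{2}$ (the maximum over the empty set being $0$), verify $l(v)\le r(v)$ and that $r(u)<l(v)\iff u\prec_P v$ whenever $u\prec_L v$, and take $P_I$ to be the interval order represented by $\{[l(v),r(v)]:v\in V\}$; then $P=L\cap P_I$.

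The crux is the third step, that $\Gamma$ is acyclic. Consider a shortest directed cycle. No two consecutive arcs can both be relations of $P$ (transitivity would shorten the cycle, and a length-two such cycle is impossible by antisymmetry); and a relation $x\prec_P y$ followed by an added arc $y\to z$ (with witness $v$, so $y\in D_P(v)$, $z\in B(v)$) can be replaced by the single added arc $x\to z$, since $x\prec_P y\prec_P v$ gives $x\in D_P(v)$ — again contradicting minimality. Hence a shortest cycle uses only added arcs. If its length is two, its two vertices together with their two witnesses form an alternating $4$-anticycle, contradicting the hypothesis. If its length is $k\ge 3$, write the arc $x_i\to x_{i+1}$ with witness $v_i$, so $x_i\prec_P v_i$, $x_{i+1}\prec_L v_i$, $x_{i+1}\not\prec_P v_i$; a short case analysis (a coincidence $v_i=x_j$ produces a strictly shorter cycle through the relation $x_i\prec_P x_j$, and a coincidence $v_i=v_j$ produces one through the added arc $x_i\to x_{j+1}$) shows $x_0,v_0,\ldots,x_{k-1},v_{k-1}$ are pairwise distinct, hence form an alternating $2k$-anticycle. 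So the remaining ingredient is a separate order-theoretic lemma, which I would prove by induction on $k$: a linear extension with no alternating $4$-anticycle has no alternating $2k$-anticycle for any $k\ge 2$, the inductive step extracting from a $2k$-anticycle $a_0,b_0,\ldots$ either a $4$-anticycle or a $2(k-1)$-anticycle according to how $a_0$ and $b_1$ (and, in one subcase, $a_2$ and $b_0$) compare in $L$.

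I expect the acyclicity argument of the third step, and inside it the $2k$-anticycle reduction, to be the only real difficulty; once $\rho$ is in hand, producing the representation and checking $P=L\cap P_I$ is bookkeeping.
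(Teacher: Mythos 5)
Your proof is correct, but it takes a genuinely different route from the paper's. Both arguments hinge on the same combinatorial fact---that a linear extension with no alternating $4$-anticycle contains no alternating $2k$-anticycle for any $k\geq 2$ (this is the paper's first Claim, proved by a minimal-counterexample analysis of $a_0$ versus $b_1$ which, in your troublesome fourth case, propagates $b_{i+1}\prec_L a_i$ around the whole cycle and contradicts linearity of $L$; your local extra comparison of $a_2$ with $b_0$ also closes that case)---but they deploy it for different purposes. The paper uses it to show that some minimal element $a$ of $P$ has no element $b$ outside the minimal set with $a\prec_L b$ but $a\not\prec_P b$, and then runs an iterative peeling algorithm that assigns $l(v)$ when $v$ becomes minimal and $r(v)$ when $v$ becomes removable, with termination guaranteed by that claim. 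You instead collect every constraint forcing one right endpoint below another into a single auxiliary digraph $\Gamma$, reduce the entire lemma to the acyclicity of $\Gamma$, and read the representation off one topological sort; the anticycle lemma enters by showing that a shortest cycle of $\Gamma$ may be assumed to consist only of added arcs and, after excluding coincidences among the witnesses $v_i$ and the cycle vertices $x_j$, yields an alternating $2k$-anticycle. Your route buys a one-shot, declarative construction whose verification of $P=L\cap P_I$ is essentially immediate from the definitions of $l$ and $r$, whereas the paper's buys an explicitly algorithmic, round-by-round representation. The details you leave implicit do all check out, but you should spell out the length-$2$ degenerate cases of the cycle reductions (a $P$-arc followed by an added arc on a $2$-cycle gives a direct contradiction, $x\prec_P v$ and $x\not\prec_P v$, rather than a shorter cycle) and the subcase analysis in the induction.
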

\begin{proof}
We prove the lemma by showing an algorithm to construct 
an interval representation of $P_I$ from $P$ and $L$. 
We note that this algorithm is inspired by the algorithms that solve 
the sandwich problems for chain graphs and for threshold graphs~\cite{DFGKM11-AOR,GKS95-JAL,MP95-book,RS95-STOC,Takaoka16-arXiv}. 
In this proof, we use an arrow $a \to b$ 
to denote the relation $a \prec_P b$, and 
we use a dashed arrow $a \dasharrow b$ to denote 
the relation $a \prec_L b$ but $a \not\prec_P b$ 
as in Figure~\ref{fig:anticycle}. 
Notice that for a partial order $Q$, 
the intersection $L \cap Q = P$ if and only if 
$Q$ has all the relations of $\to$ 
but has no relations of $\dasharrow$. 
The following facts are central to the proof 
of the correctness of the algorithm. 

\begin{claim}
$L$ contains no alternating anticycle. 
\end{claim}
\begin{proof}
Suppose for a contradiction that $L$ contains an alternating anticycle. 
Let $C$ be an alternating $2k$-anticycle of $L$ with the least number of elements, 
and let $a_0, b_0, a_1, b_1, \ldots, a_{k-1}, b_{k-1}$ be 
the consecutive elements of $C$ with 
$a_i \to b_i$ and $a_{i+1} \dasharrow b_i$ 
for any $i = 0, 1, \ldots, k-1$ (indices are modulo $k$). 
Since $L$ contains no alternating 4-anticycle, we have $k \geq 3$. 
We consider the relation between $a_0$ and $b_1$. 
If $a_0 \to b_1$ 
then the elements $a_0, b_1, a_2, b_2, a_3, b_3, \ldots, a_{k-1}, b_{k-1}$ 
induce an alternating $(2k-2)$-anticycle, 
contradicting the minimality of $C$. 
If $b_1 \to a_0$ then $a_1 \to b_0$ 
by the transitivity of $\prec_P$, a contradiction. 
If $a_0 \dasharrow b_1$ 
then the elements $a_0, a_1, b_0, b_1$ induce an alternating 4-anticycle, 
a contradiction. 
Therefore, we have $b_1 \dasharrow a_0$. 
Similarly, we have $b_{i+1} \dasharrow a_i$ for any $i = 0, 1, \ldots, k-1$. 
However, it follows from $a_i \to b_i$ that 
$L$ is not a linear order, a contradiction. 
\end{proof}

An element $a$ of $P$ is said to be \emph{minimal} 
if there is no element $b$ of $P$ with $b \prec_P a$. 
Let $S$ be the set of all minimal elements of $P$. 
\begin{claim}\label{claim:2}
There is a minimal element $a \in S$ such that 
for any element $b \in V \setminus S$, if $a \prec_L b$ then $a \prec_P b$. 
In other words, there is an element $a \in S$ that 
has no element $b \in V \setminus S$ with $a \dasharrow b$. 
\end{claim}
\begin{proof}
Suppose for a contradiction that 
for any minimal element $a \in S$, 
there is an element $b \in V \setminus S$ with $a \dasharrow b$. 
Notice that for any element $b \in V \setminus S$, 
there is a minimal element $a \in S$ with $a \to b$. 
Thus, we can grow a path alternating between $S$ and $V \setminus S$ 
until an alternating anticycle is obtained, 
contradicting that $L$ contains no alternating anticycle. 
\end{proof}

\begin{algorithm}
\caption{Constructing of the interval representations}
\label{algorithm:construction}
  \KwData{The partial order $P = (V, \prec_P)$ and 
    the linear extension $L = (V, \prec_L)$ of $P$}
  \KwResult{An interval representation $\{I(v) = [l(v), r(v)] \mid v \in V\}$ of $P_I = (V, \prec_I)$ with $P = L \cap P_I$}
  $S \gets \emptyset$, $i \gets 0$\;
  \Repeat{$V = \emptyset$}{
    $i \gets i + 1$\;
    \ForEach{element $a \in V \setminus S$}{
      \If{$a$ has no element $b \in S$ with $b \prec_P a$}{
        $S \gets S \cup \{a\}$\;
        $l(a) \gets i$\;
      }
    }
    $i \gets i + 1$\;
    \ForEach{element $a \in S$}{
      \If{$a$ has no element $b \in V \setminus S$ with $a \prec_L b$ but $a \not\prec_P b$}{
        $V \gets V \setminus \{a\}$, $S \gets S \setminus \{a\}$\;
        $r(a) \gets i$\;
      }
      \tcc{Claim~\ref{claim:2} ensures that at least one element of $S$ fulfills the {\sf\bf if} condition. }
    }
  }
\end{algorithm}

The algorithm to construct an interval representation of $P_I$ 
is given as Algorithm~\ref{algorithm:construction}. 
In the end of the loop at Line~4, 
$S$ has all the minimal elements of the suborder of $P$ induced by $V$ 
(recall that elements may be removed from $V$ at Line~13). 
Hence, Claim~\ref{claim:2} ensures that 
$S$ has at least one element fulfilling 
the {\sf\bf if} condition at Line~12. 
Since such an element is removed from $V$ and $S$ at Line~13, 
we can see by induction that 
Algorithm~\ref{algorithm:construction} eventually terminate. 
For any two elements $a, b \in V$, 
the {\sf\bf if} condition at Line~5 ensures that 
$r(a) < l(b)$ whenever $a \to b$, and 
the {\sf\bf if} condition at Line~12 ensures that 
$a \not\dasharrow b$ whenever $r(a) < l(b)$; 
the interval order $P_I$ has all the relations of $\to$ 
but has no relations of $\dasharrow$. 
Hence, Algorithm~\ref{algorithm:construction} gives 
an interval representation of $P_I$ with $P = L \cap P_I$, 
and we have Lemma~\ref{lemma:proof}. 
\end{proof}

\section{Apex orderings}\label{section:apex}
An \emph{orientation} of a graph $G$ is an assignment of 
a direction to each edge of $G$. 
A \emph{transitive orientation} of $G$ 
is an orientation such that if 
for any three vertices $u, v, w$ of $G$, 
$u \to v$ and $v \to w$ then $u \to w$. 
A transitively oriented graph is used to 
represent a partial order $P = (V, \prec_P)$, 
where an edge $u \to v$ denotes the relation $u \prec_P v$. 
A graph is called a \emph{comparability graph} 
if it has a transitive orientation. 
For a graph $G = (V, E)$, the \emph{complement} of $G$ 
is the graph $\overline{G} = (V, \overline{E})$ 
such that for any two vertices $u, v \in V$, 
$uv \in \overline{E} \iff uv \notin E$. 
The complement of a comparability graph 
is called a \emph{cocomparability graph}. 
The vertex ordering characterizations of 
these graph classes are known as follows~\cite{BLS99,KS93-SIAMDM}. 
Here, if $\sigma$ is a vertex ordering of $G$, 
we use $u <_{\sigma} v$ to denote that 
$u$ precedes $v$ in $\sigma$. 
\begin{itemize}
\item
A graph $G = (V, E)$ is a comparability graph if and only if 
there is a vertex ordering $\sigma$ of $G$ such that 
for any three vertices $u <_{\sigma} v <_{\sigma} w$, 
if $uv \in E$ and $vw \in E$ then $uw \in E$. 
We call such an ordering a \emph{comparability ordering}. 
In other words, a vertex ordering $\sigma$ is a comparability ordering 
if and only if $\sigma$ contains no subordering 
in Figure~\ref{fig:forbidden-patterns}\subref{fig:cp}. 
\item
A graph $G = (V, E)$ is a cocomparability graph if and only if 
there is a vertex ordering $\sigma$ of $G$ such that 
for any three vertices $u <_{\sigma} v <_{\sigma} w$, 
if $uw \in E$ then $uv \in E$ or $vw \in E$. 
We call such an ordering a \emph{cocomparability ordering}. 
In other words, a vertex ordering $\sigma$ is a cocomparability ordering 
if and only if $\sigma$ contains no subordering 
in Figure~\ref{fig:forbidden-patterns}\subref{fig:cpc}. 
\end{itemize}

Simple-triangle graphs are characterized by the following vertex ordering properties. 
\begin{itemize}
\item 
Let $C_4 = (u, v, w, x)$ denote a chordless cycle of length 4. 
A vertex ordering $\sigma$ of $G$ is said to fulfill the \emph{$C_4$ rule} 
if for every cycle $C_4$ in $G$, 
the implications $u <_{\sigma} v \iff w <_{\sigma} v \iff w <_{\sigma} x \iff u <_{\sigma} x$ holds. 
\item 
Let $2K_2$ denote the graph 
consisting of four vertices $u, v, w, x$ 
whose only edges are $uw$ and $vx$. 
A vertex ordering $\sigma$ of $G$ is said to fulfill the \emph{$2K_2$ rule} 
if for every subgraph $2K_2$ in $G$, 
the implications $u <_{\sigma} v \iff w <_{\sigma} v \iff w <_{\sigma} x \iff u <_{\sigma} x$ holds. 
We note that the $2K_2$ rule are also used to characterize 
co-threshold tolerance graphs~\cite{BLS99,MRT88-JGT}. 
\end{itemize}
Notice that the ordering is 
a vertex ordering of $G$ fulfilling the $C_4$ rule if and only if 
it is a vertex ordering of the complement $\overline{G}$ of $G$ 
fulfilling of the $2K_2$ rule. 
These rules characterize the simple-triangle graphs as follows. 
\begin{theorem}\label{theorem:C4}
For a graph $G$, the following conditions are equivalent: 
\begin{enumerate}[label=\upshape{(\roman*)}]
\item $G$ is a simple-triangle graph; 
\item $G$ has a cocomparability ordering fulfilling the $C_4$ rule; 
\item $G$ has a vertex ordering that contains no subordering 
in Figures~\ref{fig:forbidden-patterns}\subref{fig:cpc},~\ref{fig:forbidden-patterns}\subref{fig:p1}, 
and~\ref{fig:forbidden-patterns}\subref{fig:p2}; 
\item $\overline{G}$ has a comparability ordering fulfilling the $2K_2$ rule; 
\item $\overline{G}$ has a vertex ordering that contains no subordering 
in Figures~\ref{fig:forbidden-patterns}\subref{fig:cp},~\ref{fig:forbidden-patterns}\subref{fig:p1}, 
and~\ref{fig:forbidden-patterns}\subref{fig:p2}. 
\end{enumerate}
\end{theorem}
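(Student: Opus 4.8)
The plan is to prove Theorem~\ref{theorem:C4} by relating the vertex orderings of $G$ to the linear extensions of an associated partial order, so that Theorem~\ref{theorem:partial order} can be invoked. The equivalences \textbf{(ii)} $\iff$ \textbf{(iii)} and \textbf{(iv)} $\iff$ \textbf{(v)} are immediate from the forbidden-pattern reformulations of the cocomparability/comparability orderings together with the observation (stated just before the theorem) that the $C_4$ rule on $G$ is the $2K_2$ rule on $\overline{G}$: the two extra forbidden patterns in Figure~\ref{fig:forbidden-patterns}\subref{fig:p1},~\subref{fig:p2} should be exactly the orderings violating the $C_4$ rule, and they transfer verbatim under complementation. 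Likewise \textbf{(ii)} $\iff$ \textbf{(iv)} follows from complementation since a cocomparability ordering of $G$ is a comparability ordering of $\overline{G}$ and the rules swap. So the real content is the single equivalence \textbf{(i)} $\iff$ \textbf{(ii)}.

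For \textbf{(i)} $\Longrightarrow$ \textbf{(ii)}, start from a representation of $G$ by triangles and let $\sigma$ be the apex ordering. First I would check that $\sigma$ is a cocomparability ordering: if $u <_\sigma v <_\sigma w$ and $uw \in E$, then the triangles $T(u)$ and $T(w)$ intersect while the apex of $v$ lies between those of $u$ and $w$; a short geometric argument shows $T(v)$ must meet $T(u)$ or $T(w)$, giving $uv \in E$ or $vw \in E$. Then I would verify the $C_4$ rule directly from the geometry: for a chordless $C_4 = (u,v,w,x)$, the non-edges $uw \notin E$ and $vx \notin E$ force the triangles of the two non-adjacent pairs to be separated, and chasing which one lies to the left in each pair, combined with the apex positions, yields the biconditional chain $u <_\sigma v \iff w <_\sigma v \iff w <_\sigma x \iff u <_\sigma x$. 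This mirrors the structure of Lemma~\ref{lemma:2+2}, and I expect it to be a routine case analysis on the relative left-to-right positions of the four triangles.

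For \textbf{(ii)} $\Longrightarrow$ \textbf{(i)}, suppose $\sigma$ is a cocomparability ordering of $G$ fulfilling the $C_4$ rule. The idea is to orient the non-edges of $G$ by $\sigma$: define $u \prec v$ whenever $u <_\sigma v$ and $uv \notin E$, and let $P$ be the resulting relation. The cocomparability-ordering property is precisely what makes $\prec$ transitive, so $P = (V, \prec)$ is a partial order, and its comparability graph is $\overline{G}$; moreover $\sigma$ is a linear extension of $P$. I would then argue that the $C_4$ rule on $G$ — equivalently the $2K_2$ rule on $\overline{G}$ — translates into the statement that $\sigma$, viewed as a linear extension of $P$, fulfills the $\mathbf{2+2}$ rule: a $\mathbf{2+2}$ suborder in $P$ corresponds to a $2K_2$ in $\overline{G}$, i.e.\ a chordless $C_4$ in $G$, and the $C_4$ rule kills exactly the configuration forbidden by the $\mathbf{2+2}$ rule. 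By Theorem~\ref{theorem:partial order}, $P$ is then a linear-interval order, $P = L \cap P_I$; unwinding the definition of linear-interval order gives a triangle $T(v)$ for each $v$ with $T(u)$ left of $T(v)$ iff $u \prec v$, and since $u \prec v$ or $v \prec u$ exactly when $uv \notin E$, these triangles form a representation of $G$.

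**The main obstacle** I anticipate is the bookkeeping in the \textbf{(ii)} $\Longrightarrow$ \textbf{(i)} direction: one must be careful that "$T(u)$ lies completely to the left of $T(v)$" is an \emph{asymmetric} relation whereas "$T(u)$ and $T(v)$ are disjoint" is symmetric, so the passage between the partial order $P$ (which records \emph{which} of two disjoint triangles is on the left) and the graph $G$ (which only records disjointness) needs the $C_4$/$\mathbf{2+2}$ rule to guarantee the left/right choices made along $\sigma$ are globally consistent — this is the same phenomenon as in Lemma~\ref{lemma:2+2}. Verifying that the $\mathbf{2+2}$ rule for the linear extension $\sigma$ of $P$ is equated with the $C_4$ rule for $\sigma$ on $G$, rather than merely implied by it, is the delicate point, and I would handle it by matching the four elements $a_0,b_0,a_1,b_1$ of a $\mathbf{2+2}$ with the four vertices of a chordless $C_4$ in the appropriate cyclic order and checking both implications of the biconditional.
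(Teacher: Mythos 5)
Your proposal is correct and follows essentially the same route as the paper: the core is the equivalence of (i) with (ii)/(iv) via the partial order obtained by orienting the non-edges of $G$ along $\sigma$, matching a $\mathbf{2+2}$ in that order with a chordless $C_4$ in $G$ and invoking Theorem~\ref{theorem:partial order}, while the forward direction uses the geometric cocomparability argument for the apex ordering together with the $C_4$ rule (Lemma~\ref{lemma:2+2}). The only cosmetic difference is that the paper routes the order-theoretic direction through (iv) on $\overline{G}$ rather than (ii) on $G$, and its (ii) $\iff$ (iii) argument explicitly uses the cocomparability condition to supply the edges missing from patterns \subref{fig:p1} and \subref{fig:p2}, a detail your sketch glosses over but that causes no difficulty.
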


\begin{figure*}[t]
  \psfrag{u}{$u$}
  \psfrag{v}{$v$}
  \psfrag{w}{$w$}
  \psfrag{x}{$x$}
  \centering\subcaptionbox{\label{fig:cp}}
  {\includegraphics[scale=0.7]{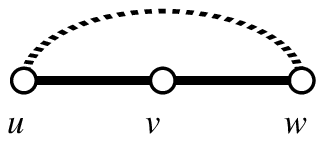}}
  \centering\subcaptionbox{\label{fig:cpc}}
  {\includegraphics[scale=0.7]{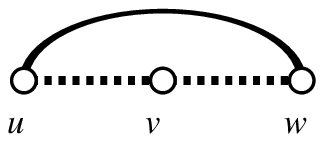}}
  \centering\subcaptionbox{\label{fig:p1}}
  {\includegraphics[scale=0.7]{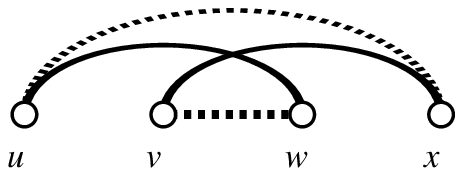}}
  \centering\subcaptionbox{\label{fig:p2}}
  {\includegraphics[scale=0.7]{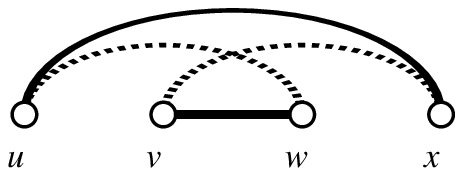}}
  \caption{Forbidden patterns. 
    Lines and dashed lines denote edges and non-edges, respectively. 
    Edges that may or may not be present is not drawn. 
  }
  \label{fig:forbidden-patterns}
\end{figure*}

\begin{proof}
It is obvious that 
\textbf{\textrm{(ii)}} $\iff$ \textbf{\textrm{(iv)}} and 
\textbf{\textrm{(iii)}} $\iff$ \textbf{\textrm{(v)}}. \par
\textbf{\textrm{(i)} $\Longrightarrow$ \textrm{(ii)}:}
It suffices to show that the apex ordering $\sigma$ 
of a simple-triangle graph $G = (V, E)$ is a cocomparability ordering 
since an apex ordering is known to fulfill the $C_4$ rule~\cite{CK87-CN} 
(see also Lemma~\ref{lemma:2+2}). 
Suppose that $G$ has three vertices $u <_{\sigma} v <_{\sigma} w$ with $uw \in E$. 
Here, we use $T(v)$ to denote the triangle of a vertex $v$ in the representation of $G$. 
Since $u <_{\sigma} v <_{\sigma} w$ and $T(u) \cap T(w) \neq \emptyset$, 
the triangle $T(v)$ must intersect $T(u)$ or $T(w)$.
Hence, we have $uv \in E$ or $vw \in E$. 
\par
\textbf{\textrm{(iv)} $\Longrightarrow$ \textrm{(i)}:}
Let $\sigma$ be a comparability ordering 
of $\overline{G} = (V, \overline{E})$ fulfilling the $2K_2$ rule. 
From $\overline{G}$, we can obtain the partial order $P$ 
if we orient the edges of $\overline{G}$ transitively 
so that $u \to v \iff u <_{\sigma} v$ 
since $\sigma$ is a comparability ordering. 
Since $\sigma$ fulfills the $2K_2$ rule, 
$\sigma$ is also a linear extension of $P$ fulfilling the $\mathbf{2+2}$ rule. 
By Theorem~\ref{theorem:partial order}, $P$ is a linear-interval order, 
and hence, $G$ is a simple-triangle graph. 
\par
\textbf{\textrm{(ii)} $\Longrightarrow$ \textrm{(iii)}:}
Let $\sigma$ be a cocomparability ordering 
of $G = (V, E)$ fulfilling the $C_4$ rule. 
The ordering $\sigma$ contains no subordering 
in Figure~\ref{fig:forbidden-patterns}\subref{fig:cpc} 
since $\sigma$ is a cocomparability ordering. 
Suppose for a contradiction that 
there are four vertices $u <_{\sigma} v <_{\sigma} w <_{\sigma} x$ on $\sigma$ 
that induce a subordering in Figure~\ref{fig:forbidden-patterns}\subref{fig:p1}. 
We have $uv \in E$ for otherwise the vertices 
$u <_{\sigma} v <_{\sigma} w$ would induce 
a subordering in Figure~\ref{fig:forbidden-patterns}\subref{fig:cpc}. 
We also have $wx \in E$ for otherwise the vertices 
$v <_{\sigma} w <_{\sigma} x$ would induce 
a subordering in Figure~\ref{fig:forbidden-patterns}\subref{fig:cpc}. 
Hence, the vertices $u <_{\sigma} v <_{\sigma} w <_{\sigma} x$ 
induce $C_4$ that violates the $C_4$ rule, a contradiction. 
Similarily, suppose for a contradiction that 
there are four vertices $u <_{\sigma} v <_{\sigma} w <_{\sigma} x$ on $\sigma$ 
that induce a subordering in Figure~\ref{fig:forbidden-patterns}\subref{fig:p2}. 
We have $uv \in E$ for otherwise the vertices 
$u <_{\sigma} v <_{\sigma} x$ would induce 
a subordering in Figure~\ref{fig:forbidden-patterns}\subref{fig:cpc}. 
We also have $wx \in E$ for otherwise the vertices 
$u <_{\sigma} w <_{\sigma} x$ would induce 
a subordering in Figure~\ref{fig:forbidden-patterns}\subref{fig:cpc}. 
Hence, the vertices $u <_{\sigma} v <_{\sigma} w <_{\sigma} x$ 
induce $C_4$ that violates the $C_4$ rule, a contradiction. 
\par
\textbf{\textrm{(ii)} $\Longleftarrow$ \textrm{(iii)}:}
Let $\sigma$ be a vertex ordering that contains no subordering 
in Figures~\ref{fig:forbidden-patterns}\subref{fig:cpc},~\ref{fig:forbidden-patterns}\subref{fig:p1}, 
and~\ref{fig:forbidden-patterns}\subref{fig:p2}. 
The ordering $\sigma$ is a cocomparability ordering 
since $\sigma$ contains no subordering 
in Figure~\ref{fig:forbidden-patterns}\subref{fig:cpc}. 
We can verify that any four vertices of $C_4$ 
that violates the $C_4$ rule induce the subordering in either 
Figure~\ref{fig:forbidden-patterns}\subref{fig:p1} or 
~\ref{fig:forbidden-patterns}\subref{fig:p2}. 
Hence, $\sigma$ fulfills the $C_4$ rule. 
\par
We can also prove \textbf{\textrm{(iv)} $\iff$ \textrm{(v)}} 
by the similar argument in the proof 
of \textbf{\textrm{(ii)} $\iff$ \textrm{(iii)}}. 
\end{proof}

We can also describe the characterization in terms of orientations of graphs. 
An orientation of a graph is called \emph{acyclic} if it has no directed cycle. 
An orientation of a graph is called \emph{alternating} 
if it is transitive on every chordless cycle of length 
greater than or equal to 4, that is, 
the directions of the oriented edges alternate. 
A graph is called \emph{alternately orientable}~\cite{Hoang87-JCTSB} 
if it has an alternating orientation. 
%
%
Since cocomparability graphs has no chordless cycle of length greater than 4, 
we have the following from Theorem~\ref{theorem:C4}. 
\begin{corollary}\label{corollary:alternating orientation}
A graph $G$ is a simple-triangle graph if and only if there is 
an alternating orientation of $G$ and 
a transitive orientation of the complement $\overline{G}$ of $G$ 
such that the union of the oriented edges of $G$ and $\overline{G}$ form 
an acyclic orientation of the complete graph. 
\end{corollary}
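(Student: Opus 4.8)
The plan is to reduce the corollary to Theorem~\ref{theorem:C4} through the elementary dictionary between vertex orderings and acyclic orientations of the complete graph on $V$: a vertex ordering $\sigma$ induces the orientation $u \to v \iff u <_{\sigma} v$, which is acyclic, and conversely an acyclic orientation of the complete graph is a tournament with no directed cycle, hence transitive, so its unique topological order is a vertex ordering $\sigma$. Splitting the arcs of such an orientation according to whether the underlying pair is an edge of $G$ or of $\overline{G}$ produces exactly an orientation of $G$ together with an orientation of $\overline{G}$ whose union is the orientation we started from. Thus it suffices to show that, for the ordering $\sigma$ attached to such an orientation, the restriction to $G$ is alternating and the restriction to $\overline{G}$ is transitive precisely when $\sigma$ fulfills conditions \textbf{(ii)} and \textbf{(iv)} of Theorem~\ref{theorem:C4}; the corollary then follows from the equivalence of those conditions with \textbf{(i)}, together with the fact (used in the proof of Theorem~\ref{theorem:C4}) that \textbf{(ii)} and \textbf{(iv)} hold for one and the same $\sigma$.

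First I would dispatch the routine half. An orientation of $\overline{G}$ consistent with $\sigma$ is transitive if and only if $\sigma$ is a comparability ordering of $\overline{G}$: one direction is immediate from transitivity, and the other is precisely the construction of the partial order $P$ from $\overline{G}$ and $\sigma$ used in the proof of \textbf{(iv)}~$\Longrightarrow$~\textbf{(i)}. In either direction of the corollary $\overline{G}$ therefore has a transitive orientation, so $\overline{G}$ is a comparability graph and $G$ is a cocomparability graph; hence, as noted just before the corollary, $G$ has no chordless cycle of length greater than $4$, and consequently an orientation of $G$ consistent with $\sigma$ is alternating if and only if it is transitive on every induced $C_4$ of $G$.

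The remaining point concerns a single induced $C_4 = (u,v,w,x)$ of $G$. A transitive orientation of a $C_4$ is one of the two orientations in which one pair of opposite vertices are the two sources and the other pair are the two sinks. If it is consistent with $\sigma$, then either $u <_{\sigma} v$, $u <_{\sigma} x$, $w <_{\sigma} v$, $w <_{\sigma} x$ all hold or all four fail, and in both cases the chain $u <_{\sigma} v \iff w <_{\sigma} v \iff w <_{\sigma} x \iff u <_{\sigma} x$ of the $C_4$ rule is satisfied. Conversely, assume $\sigma$ fulfills the $C_4$ rule on this $C_4$; since $C_4$ is vertex-transitive and the $C_4$ rule is invariant under cyclically relabelling $(u,v,w,x)$, one may assume the $\sigma$-smallest of the four vertices is $u$, so both cycle-edges at $u$ point out of $u$, and transitivity on the $C_4$ then forces $u$ and $w$ to be the sources, i.e.\ the orientation is the one consistent with $\sigma$. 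Combining this with the first two paragraphs yields both directions: from a simple-triangle graph, take $\sigma$ satisfying \textbf{(ii)} and \textbf{(iv)} and orient $G$ and $\overline{G}$ along it; conversely, from the given pair of orientations pass to the topological order $\sigma$ of their acyclic union and read off that $\sigma$ is a comparability ordering of $\overline{G}$ fulfilling the $2K_2$ rule, which is condition \textbf{(iv)}. The only step I expect to need care is this last case analysis — specifically verifying that the asymmetric-looking $C_4$ rule is symmetric enough under the automorphisms of $C_4$ to place the smallest vertex at $u$ — while everything else is the orientation/ordering dictionary together with results already established in Section~\ref{section:apex}.
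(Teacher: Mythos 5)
Your proof is correct and takes essentially the same route as the paper, which derives the corollary in one line from Theorem~\ref{theorem:C4} via exactly the dictionary you spell out: vertex orderings correspond to acyclic orientations of the complete graph, the $C_4$ rule corresponds to transitivity of the induced orientation on each chordless $C_4$, and the absence of longer chordless cycles in cocomparability graphs makes ``transitive on every $C_4$'' equivalent to ``alternating.'' Your expansion fills in the details the paper leaves implicit, and they all check out.
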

Moreover, we have the following from Theorem~\ref{theorem:partial order} 
since being a linear-interval order 
is a comparability invariant~\cite{COS08-ENDM}. 
\begin{corollary}
Let $G$ be a simple-triangle graph. 
For any transitive orientation of the complement $\overline{G}$ of $G$, 
there is an alternating orientation of $G$ 
such that the union of the oriented edges of $G$ and $\overline{G}$ form 
an acyclic orientation of the complete graph. 
\end{corollary}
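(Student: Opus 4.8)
The plan is to deduce this corollary from Theorem~\ref{theorem:partial order} together with the comparability-invariance of the linear-interval order property, following exactly the pattern already used for Corollary~\ref{corollary:alternating orientation}. First I would fix a transitive orientation of $\overline{G}$; this orientation represents a partial order $P' = (V, \prec_{P'})$ whose comparability graph is $\overline{G}$. Since $G$ is a simple-triangle graph, $\overline{G}$ is a cocomparability graph, and by Theorem~\ref{theorem:C4} the complement $\overline{G}$ has \emph{some} transitive orientation yielding a linear-interval order $P$; the key input is that being a linear-interval order is a comparability invariant~\cite{COS08-ENDM}, so since $P$ and $P'$ have the same comparability graph $\overline{G}$, the order $P'$ is itself a linear-interval order.

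Next I would apply Theorem~\ref{theorem:partial order} to $P'$: it has a linear extension $L = (V, \prec_L)$ containing no alternating 4-anticycle. I would then transfer $L$ and $P'$ back into orientations of $G$ and $\overline{G}$ respectively. The linear extension $L$ gives an acyclic orientation of the complete graph $K_{|V|}$ (orient $u \to v$ iff $u <_L v$), which restricts on the edges of $\overline{G}$ to the chosen transitive orientation (because $L$ extends $P'$) and restricts on the edges of $G$ to some orientation; call the latter orientation $D$. The union of $D$ and the transitive orientation of $\overline{G}$ is then, by construction, the acyclic tournament given by $L$, so the acyclicity requirement is automatic. It remains to check that $D$ is an alternating orientation of $G$, i.e.\ transitive on every chordless cycle of length $\ge 4$; since cocomparability graphs (and hence simple-triangle graphs, being a subclass) contain no chordless cycle of length greater than $4$, it suffices to verify the alternating condition on chordless $4$-cycles of $G$. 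A chordless $C_4 = (u,v,w,x)$ in $G$ has $uw, vx \in \overline{E}$ with $uw, vx$ being a $2K_2$ in $\overline{G}$, so the non-edges $uw$ and $vx$ of $G$ form exactly a $\mathbf{2+2}$ suborder (or its reverse) inside $P'$; the fact that $L$ contains no alternating $4$-anticycle is equivalent to $L$ fulfilling the $\mathbf{2+2}$ rule, which forces the four vertices to appear in $L$ in one of the orders making $D$ alternate on this $C_4$. This last verification is the one computational point of the argument, and I expect it to be the main (though still routine) obstacle: one must carefully match the $\mathbf{2+2}$ rule on $P'$ against the definition of an alternating orientation on $C_4$, essentially the same bookkeeping that underlies the equivalence $C_4$~rule $\iff$ $2K_2$~rule noted before Theorem~\ref{theorem:C4} and the cycle/anticycle analysis in the proof of Lemma~\ref{lemma:proof}.

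Finally I would assemble these pieces into a short proof: fix the transitive orientation of $\overline{G}$, pass to the partial order $P'$, invoke comparability invariance and Theorem~\ref{theorem:partial order} to get the $\mathbf{2+2}$-rule linear extension $L$, read off the orientation $D$ of $G$ from $L$, note acyclicity of the union is immediate from $L$ being linear, and check the alternating property of $D$ on chordless $4$-cycles via the $\mathbf{2+2}$ rule. The only genuinely delicate step is that last verification; everything else is a direct translation between the order-theoretic language of Section~\ref{section:orders} and the orientation language introduced just before the corollary.
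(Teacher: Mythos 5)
Your proposal is correct and follows exactly the route the paper intends: the paper's own justification is the one-line remark that the corollary follows from Theorem~\ref{theorem:partial order} together with the comparability invariance of being a linear-interval order, and your write-up is a faithful (and more detailed) expansion of that argument, including the correct translation of the $\mathbf{2+2}$ rule on the linear extension into the alternating condition on chordless $4$-cycles.
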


\section{Concluding remarks}\label{section:conclusion}
We have shown a vertex ordering characterization of simple-triangle graphs 
based on the ordering of the apices of the triangles in the representation. 
We conclude this paper with some miscellaneous topics related to this characterization. 

\begin{figure*}[t]
  \psfrag{a1}{$a_1$}
  \psfrag{a2}{$a_2$}
  \psfrag{a3}{$a_3$}
  \psfrag{b1}{$b_1$}
  \psfrag{b2}{$b_2$}
  \psfrag{b3}{$b_3$}
  \psfrag{c1}{$c_1$}
  \psfrag{c2}{$c_2$}
  \psfrag{c3}{$c_3$}
  \psfrag{c4}{$c_4$}
  \centering\subcaptionbox{The order $W$. \label{fig:W-order}}
  {\includegraphics[scale=0.6]{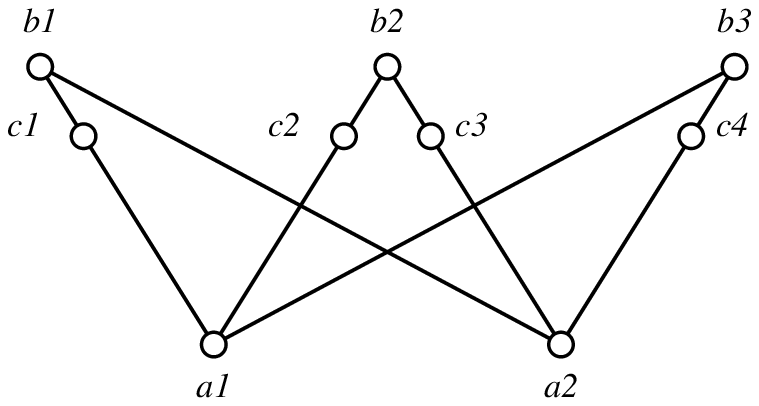}}
  \centering\subcaptionbox{The graph $\overline{W}$. \label{fig:W-graph}}
  {\includegraphics[scale=0.6]{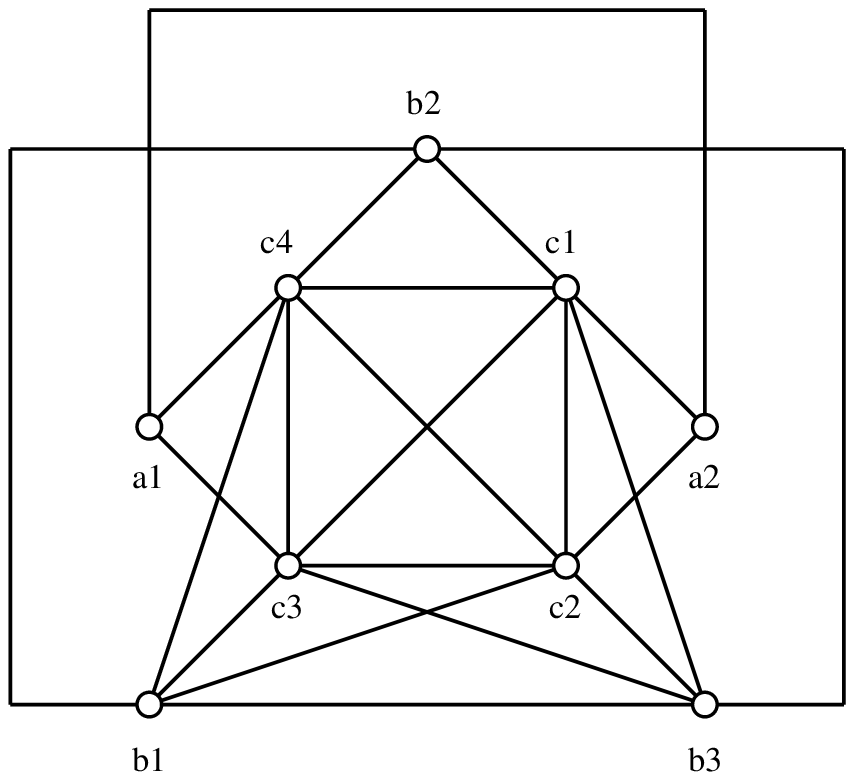}}\\
  \centering\subcaptionbox{The order $IV$. \label{fig:IV-order}}
  {\includegraphics[scale=0.6]{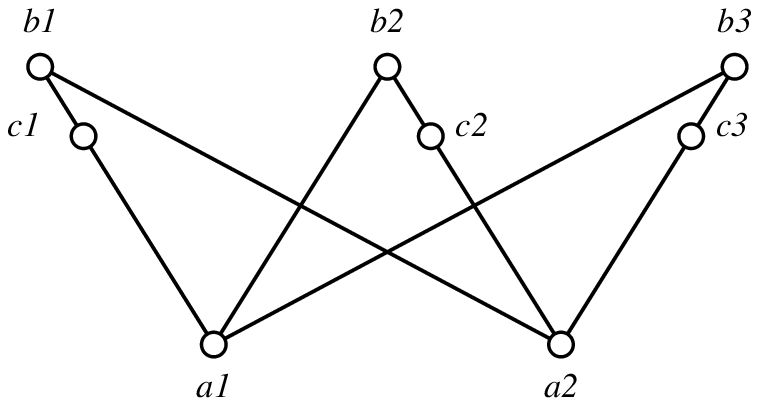}}
  \centering\subcaptionbox{The graph $\overline{IV}$. \label{fig:IV-graph}}
  {\includegraphics[scale=0.6]{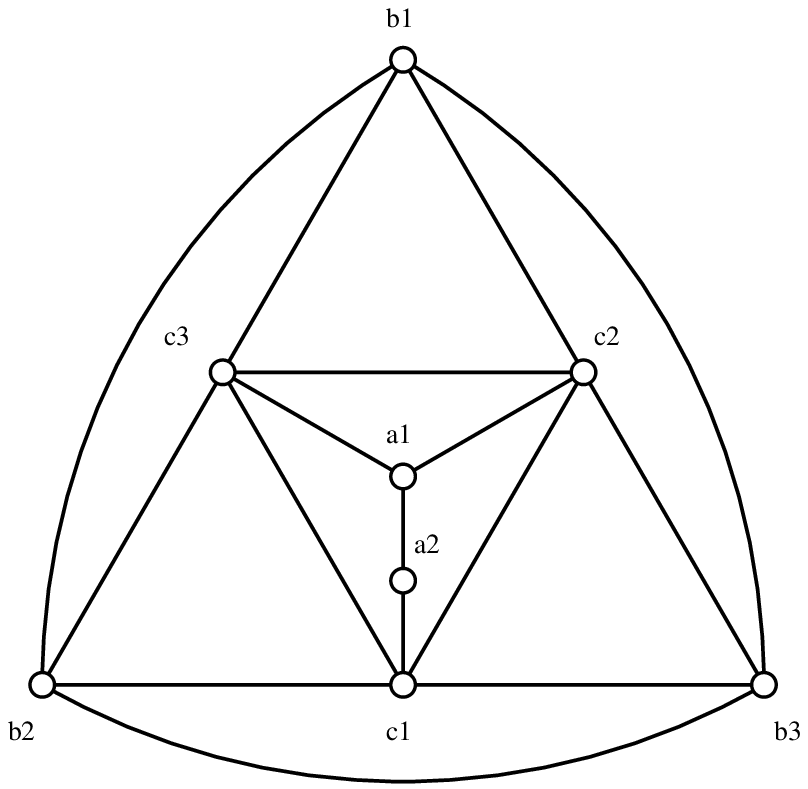}}
  \caption{The partial orders and the graphs. }
  \label{fig:conclusion}
\end{figure*}

Corollary~\ref{corollary:alternating orientation} indicates that 
a simple-triangle graph is a cocomparability graph that has 
an alternating orientation~\cite{Felsner98-JGT}, 
but we can see the converse is not true. 
The separating example is the graph $\overline{W}$ 
in Figure~\ref{fig:conclusion}\subref{fig:W-graph}. 
This graph $\overline{W}$ has the unique alternating orientation (up to reversal), 
and the complement of $\overline{W}$ 
has the unique transitive orientation (up to reversal) 
whose Hasse diagram is shown in Figure~\ref{fig:conclusion}\subref{fig:W-order}. 
Suppose that $a_1 \to a_2$. 
Then the cycle $(a_1, a_2, c_2, c_3)$ requires that 
$a_1 \to a_2 \iff c_2 \to a_2$, 
while the cycles 
$(a_1, a_2, c_1, c_3)$, 
$(c_1, c_3, b_1, b_2)$, and 
$(c_1, c_2, b_1, b_2)$ require that 
$a_1 \to a_2 \iff c_1 \to c_3 \iff 
b_1 \to b_2 \iff b_1 \to c_2$. 
Hence, we have a directed cycle $(b_1, c_2, a_2)$ 
in the union of the oriented edges of $G$ and $\overline{G}$. 
Suppose on the contrary that $a_1 \gets a_2$. 
Then the cycle $(a_1, a_2, c_2, c_3)$ requires that 
$a_1 \gets a_2 \iff a_1 \gets c_3$, 
while the cycles 
$(a_1, a_2, c_2, c_4)$, 
$(c_2, c_4, b_2, b_3)$, and 
$(c_3, c_4, b_2, b_3)$ require that 
$a_1 \gets a_2 \iff c_2 \gets c_4 \iff 
b_2 \gets b_3 \iff c_3 \gets b_3$. 
Hence, we have a directed cycle $(b_3, c_3, a_1)$ 
in the union of the oriented edges of $G$ and $\overline{G}$, and 
Corollary~\ref{corollary:alternating orientation} indicates that 
$\overline{W}$ is not a simple-triangle graph. 

A graph is a \emph{permutation graph} if 
it is simultaneously a comparability graph and a cocomparability graph. 
A permutation graph $G$ is known to have the unique transitive orientation 
(up to reversal) when the complement $\overline{G}$ of $G$ 
has the unique transitive orientation (see~\cite{Golumbic04} for example). 
This derives the polynomial-time algorithm for testing isomorphism of 
permutation graphs~\cite{Colbourn81-Networks}. 
Hence, it is natural to ask whether a simple-triangle graph $G$ has 
the unique alternating orientation when the complement $\overline{G}$ of $G$ 
has the unique transitive orientation (up to reversal). 
We give the negative answer to this question. 
The graph $\overline{IV}$ in Figure~\ref{fig:conclusion}\subref{fig:IV-graph} 
does not have the unique alternating orientation since we can reverse 
the orientation of edges on the cycle $(b_2, b_3, c_2, c_3)$, 
while the complement of $\overline{IV}$ 
has the unique transitive orientation (up to reversal) 
whose Hasse diagram is shown 
in Figure~\ref{fig:conclusion}\subref{fig:IV-order}. 

We finally pose two open questions for simple-triangle graphs. 
The first question is related to the recognition problem. 
The polynomial-time recognition algorithm is already 
known~\cite{Mertzios15-SIAMDM,Takaoka16-arXiv}, 
but the running time of it is $O(n^2\bar{m})$, 
where $n$ and $\bar{m}$ is the number of vertices and non-edges 
of the graph, respectively. 
The algorithm reduces the recognition to a problem of 
covering an associated bipartite graph by two chain graphs with additional conditions. 
Our first question is that 
can we recognize simple-triangle graphs in polynomial time 
by using the vertex ordering characterization in this paper? 
\begin{problem}
By using the vertex ordering characterization of simple-triangle graphs, 
find a recognition algorithm faster than 
the existing ones~\cite{Mertzios15-SIAMDM,Takaoka16-arXiv}. 
\end{problem}
The second question is related to the isomorphism problem. 
A \emph{canonical ordering} of a graph $G$ is a vertex ordering of $G$ 
such that every graph that is isomorphic to $G$ 
has the same canonical ordering as $G$. 
Hence, the graph isomorphism problem can be solved by 
computing the canonical orderings of the two given graphs and 
testing whether these two ordered graphs are identical. 
Our second question is that is there any canonical ordering of simple-triangle graphs 
based on the vertex ordering characterization in this paper? 
\begin{problem}
By using the vertex ordering characterization of simple-triangle graphs, 
find a canonical ordering computable in polynomial time. 
\end{problem}



\end{document}